\pdfoutput=1
\documentclass[11pt,a4paper]{article}
\usepackage[a4paper,margin=2.55cm]{geometry}
\usepackage[T1]{fontenc}
\usepackage[utf8]{inputenc}
\usepackage{lmodern}
\usepackage{amsmath,amssymb,amsthm,mathtools}
\numberwithin{equation}{section}
\usepackage{enumitem}
\usepackage{needspace}
\usepackage{microtype}
\usepackage{booktabs}
\usepackage{array}
\usepackage{xcolor}
\usepackage[hidelinks]{hyperref}

\setlist{nosep,leftmargin=2em}

\newtheorem{theorem}{Theorem}[section]
\newtheorem{lemma}[theorem]{Lemma}
\newtheorem{proposition}[theorem]{Proposition}
\newtheorem{corollary}[theorem]{Corollary}
\theoremstyle{definition}

\theoremstyle{remark}

\newcommand{\OPT}{\operatorname{OPT}}
\newcommand{\Reach}{\operatorname{Reach}}

\newcommand{\pathset}{\mathcal{R}}
\newcommand{\Hreach}{\rightsquigarrow_H}

\newcommand{\R}{\mathbb{R}}

\title{Local Representatives and Shortest Completions for\\ Next-to-Shortest Paths in Directed Graphs}
\author{Shisheng Li}
\date{September 2026}

\begin{document}
\maketitle

\begin{abstract}
Given a directed graph with positive edge weights and two vertices $s,t$, a \emph{next-to-shortest} $s$--$t$ path is a shortest simple $s$--$t$ path among those whose length is strictly larger than the shortest-path distance. The problem was introduced by Lalgudi, Papaefthymiou and Potkonjak in 1996; it is NP-hard when zero-weight edges are allowed, and its complexity on positively weighted digraphs remained open for almost three decades until Chen, Wein and Zhang recently gave a polynomial-time algorithm running in $O(n^4m^3\log n)$ time. We give a substantially faster algorithm within their optimal-middle-segment framework.

The core idea is to split the problem into ``choosing a prefix'' and ``completing it.'' Given a prefix $P\colon s\to A$ that uses only shortest-path edges, delete the vertices already used by $P$, forbid leaving $A$ along shortest-path edges, and the best completion is a single ordinary shortest-path computation. The difficulty lies in choosing $P$: even for a fixed $A$, deciding whether some shortest prefix admits a completion is NP-complete.

We do not solve these fixed-$A$ subproblems one by one. Fix any globally optimal next-to-shortest path; its middle segment induces a \emph{boundary edge} $x\to c$ in the shortest-path DAG. For the correct triple $(A,B,x)$, the optimal path itself certifies $c$ as a feasible next hop, and we prove that every feasible next hop that is not earlier than $c$ in a topological order can be combined with the same middle segment into another globally optimal path. Consequently only the feasible next hop of maximum topological index needs to be kept for each triple, giving $O(n^3)$ representatives. A two-dimensional DAG dynamic program with a local reward generates all of these representatives simultaneously. The total running time is
\[
O\bigl(n^3(m+n\log n)\bigr),
\]
and $O(n^3m)$ on unweighted graphs. The correctness proof rests on an uncrossing lemma: whenever a candidate prefix is not compatible with the middle segment, the last intersection between a reference prefix and the candidate's partner suffix can be moved to a strictly earlier position along that suffix, which cannot go on forever.
\end{abstract}

\section{Introduction}

Let $G=(V,E,w)$ be a finite simple directed graph with $w\colon E\to\R_{>0}$, and let $s\ne t$ be two vertices; write $n=|V|$ and $m=|E|$. A \emph{next-to-shortest} $s$--$t$ path is a shortest path among all simple $s$--$t$ paths whose length is strictly larger than the shortest-path distance $d_G(s,t)$. The problem is also known as the \emph{strictly second shortest path} problem. It differs from the classical $k$ shortest simple paths problem of Yen~\cite{yen71} and Katoh, Ibaraki and Mine~\cite{kim82} in an essential way: the second entry of the $k$-shortest list can be another shortest path, and exponentially many shortest paths may have to be enumerated before the first strictly longer one appears.

\subsection{Background}

The problem was introduced by Lalgudi, Papaefthymiou and Potkonjak~\cite{lpp96,lpp00} in the context of block processing of data streams in hardware, where the strictly second shortest cycle or path determines the achievable delay. Shortly afterwards, Lalgudi and Papaefthymiou~\cite{lalgudi97} showed that the problem is NP-hard (its decision version NP-complete) on directed graphs with \emph{nonnegative} edge weights, by a reduction from the two vertex-disjoint paths problem in digraphs, which is NP-complete by the classical result of Fortune, Hopcroft and Wyllie~\cite{fhw80}. Their reduction relies on zero-weight edges. They also observed that the relaxation in which the path need not be simple (the \emph{walk} version) is easy. The status of the simple-path problem on directed graphs with \emph{positive} weights was left open.

On undirected graphs the problem turned out to be tractable, and a sequence of increasingly efficient algorithms followed. Krasikov and Noble~\cite{krasikov04} gave the first polynomial-time algorithm for positive weights, running in $O(n^3m)$ time; they also conjectured that the directed version is NP-complete. Li, Sun and Chen~\cite{lsc06} improved the running time to $O(n^3)$, Kao, Chang, Wang and Juan~\cite{kcwj11} to $O(n^2)$, and Wu~\cite{wu13} to $O(m+n\log n)$, which is linear once the distances from $s$ and to $t$ are known. The undirected problem with nonnegative weights was settled by Zhang and Nagamochi~\cite{zn12} and, with a linear-time algorithm after distance computation, by Wu, Guo and Wang~\cite{wgw12}. Special graph classes were also considered~\cite{mp06,bmp09}. Berger, Seymour and Spirkl~\cite{bss21} studied the related problem of finding an induced path that is not a shortest path.

For directed graphs with positive weights, Wu and Wang~\cite{wuwang15} derived structural properties of a next-to-shortest path in an arbitrary digraph and obtained an $O(n^3)$-time algorithm for planar digraphs; the general case remained open, and it was not even known how to find \emph{some} non-shortest simple $s$--$t$ path in polynomial time. The question was raised repeatedly in the literature (see the list in~\cite{cwz}), including in work on detours and on congested shortest paths~\cite{bcdf19,avwwx23,fgl23,hmps23,aw23}. Chen, Wein and Zhang~\cite{cwz} (henceforth CWZ) recently resolved it: they gave the first polynomial-time algorithm for the next-to-shortest path problem on positively weighted directed graphs, with running time $O(n^4m^3\log n)$. Their approach first reduces the input to an \emph{$(s,t)$-layered} digraph and then, guided by a structural ``Key Lemma'' about a fixed optimal solution, enumerates all choices of the two endpoints of the optimal middle segment together with a pair of forward edges on the same layer, and computes a pair of disjoint shortest paths through the chosen edges for each choice, using the linear-time algorithm of Tholey~\cite{tholey12} for two disjoint paths in a DAG (which builds on Perl and Shiloach~\cite{shiloach78}). The bound $O(n^4m^3\log n)$ is their Theorem~1.1. CWZ remark in their introduction that their focus was polynomiality rather than efficiency, that a bound of $O(n^4m^2\log n)$ seems attainable by avoiding edge subdivisions, and that obtaining a much faster algorithm is an interesting open question.

\subsection{Our result}

We give an algorithm for the next-to-shortest path problem on positively weighted directed graphs with running time
\[
O\bigl(n^3(m+n\log n)\bigr),
\]
which becomes $O(n^3m)$ on unweighted graphs. Compared with the $O(n^4m^3\log n)$ bound of CWZ this is an improvement by a factor of roughly $nm^2$, and the algorithm works directly on the input graph without any layering reduction. The working space is $O(n+m+h^2)$, where $h\le n$ is the number of vertices lying on shortest $s$--$t$ paths. The contribution is the combination of a new local sufficient condition for compatibility with the optimal middle segment, the selection of one representative prefix per triple of vertices, a one-sided completion step, and a direct proof on the input graph; the framework in which these are placed is due to CWZ, as we now explain.

\subsection{Relation to the approach of CWZ}\label{sec:cwz}

Our algorithm follows the optimal-middle-segment framework of CWZ. Their key idea is to fix a globally optimal next-to-shortest path, split it into a shortest prefix, a middle segment and a shortest suffix (in their layered setting, the segment between the first and the last back-edge), and to search not for the unknown middle segment but for a pair of vertex-disjoint shortest paths that is guaranteed to be compatible with it. Their Key Lemma shows that there is a pair of forward edges on the same layer such that \emph{every} pair of disjoint shortest paths through these two edges avoids the interior of the middle segment. The algorithm enumerates such pairs of edges, computes one disjoint pair for each of them, deletes the vertices of the pair except the two endpoints $A,B$ of the segment, and searches a shortest $A\to B$ path in the remainder. We keep this framework---fix the unknown middle segment and look for a shortest-path environment that preserves it---and change three things.

\emph{The local constraint.} Instead of prescribing an edge on each of the two paths, the Local Replacement Theorem (Theorem~\ref{thm:replacement}) constrains only the first path, and only at one vertex: $x\in P$, and the next hop of $P$ after $x$ is not earlier than a vertex $c$ in a topological order of the shortest-path DAG. The second path $Q\colon B\to t$ must still exist and be disjoint from $P$, but it is not required to pass through any prescribed edge, and the algorithm never enumerates $c$: for each triple $(A,B,x)$ it takes the feasible next hop of maximum topological index, which the optimal path itself certifies to be admissible. This is a new sufficient condition with its own proof, not a shortcut through the certificate of CWZ.

\emph{The completion.} CWZ fix both paths, delete their vertices and search a shortest $A\to B$ path. We use only the prefix $P$ to define the residual graph $G_P$---delete $P\setminus\{A\}$ and the tight out-edges of $A$---and search a shortest $A\to t$ path. The path $Q$ enters the selection of $P$ but not the completion: the suffix actually found need not pass through $B$, and need not coincide with the path $K\circ Q$ used in the proof. In both algorithms the inner step is one shortest-path computation; the difference lies in which paths are fixed, which vertices are deleted, and whether the target is $B$ or $t$.

\emph{The setting of the proof.} CWZ first reduce a general input to an $(s,t)$-straight graph and then to an $(s,t)$-layered graph. Our structural results are proved directly on the input graph: the shortest-path DAG $H$ provides the acyclic structure, and only the set $S$ of straight interior vertices of the middle segment enters the disjointness analysis, while non-straight vertices are left to the completion step. The correctness proof therefore does not invoke the layering reduction or the Key Lemma of CWZ as a black box---a statement about the dependencies of the proof, not about the origin of the method. The basic exchange argument, splicing a shorter non-shortest simple path and contradicting optimality, is already used by CWZ; we package it as middle-segment rigidity (Lemma~\ref{lem:rigidity}) and then derive the local replacement condition through barrier splicing and the strict descent of a last intersection (Lemma~\ref{lem:uncrossing}). The dynamic program is the standard two-token recursion on a DAG; what is specific to our use of it is the local reward and the sharing of one table across all $B$. CWZ also note that notions related to their back-edges appear in earlier work, including Wu and Wang~\cite{wuwang15}, and that the term \emph{$(s,t)$-straight} follows Berger, Seymour and Spirkl~\cite{bss21}; we use \emph{straight vertex} in the same sense.

\subsection{Technical overview}

Our starting point is a simple \emph{completion} operation. Let $H$ be the union of all shortest $s$--$t$ paths; it is a DAG. Given a prefix $P\colon s\to A$ inside $H$, delete the vertices of $P\setminus\{A\}$ and all shortest-path edges leaving $A$. A shortest $A\to t$ path in the residual graph, concatenated with $P$, is automatically simple and non-shortest. Hence, once the prefix is fixed, the best completion is a single shortest-path computation.

The difficulty is choosing the prefix. Let
\[
F(A)=\min\{C(P):P\text{ is an }s\to A\text{ path in }H\},
\]
where $C(P)$ is the length of the best completion of $P$. If a next-to-shortest path exists, then $\OPT=\min_A F(A)$. However, for a given $A$, already deciding whether $F(A)<\infty$ is NP-complete (Proposition~\ref{prop:fixedA-hard}, by a reduction from a disjoint-paths problem of Eilam-Tzoreff~\cite{eilam98}). So, unless $\mathrm{P}=\mathrm{NP}$, this decomposition does not yield a polynomial-time algorithm based on exactly evaluating every $F(A)$.

Our approach only requires a polynomial-size family of \emph{representative} prefixes that hits some global optimum. Fix a globally optimal next-to-shortest path $W^*=L_0KR_0$, where $L_0$ and $R_0$ are the longest tight prefix and suffix and $K$ is the middle segment; the first and last edges of $K$ are not shortest-path edges. The path $W^*$ is used only in the correctness proof; the algorithm never recovers $K$ and does not know the boundary vertex $c$ introduced below.

The proof starts from \emph{middle-segment rigidity}: if a proper subsegment of $K$ can be joined into a simple $s$--$t$ path by a tight prefix and a tight suffix, then it must consist of tight edges only. Rigidity yields a tight path that every tight $s\to A$ path must cross. If a candidate prefix cannot be combined with $K$, splicing tails on both sides of this barrier path moves the last intersection between a reference prefix and the candidate's partner suffix to a strictly earlier position. An \emph{uncrossing lemma} guarantees that this improvement can be repeated and that the last intersection always exists. A finite path does not admit an infinite strict descent, so the candidate prefix must in fact be compatible with $K$ (Theorem~\ref{thm:replacement}).

Algorithmically, the theorem says that for each triple $(A,B,x)$ of vertices of $H$ we only need one prefix: a prefix that continues from $x$ along the feasible next hop of maximum topological index. A two-dimensional DAG dynamic program with a local reward computes these representatives, and each representative is completed by one ordinary shortest-path computation. This yields the $O(n^3(m+n\log n))$ bound.

\subsection{Further related work}

The next-to-shortest path problem is intimately related to disjoint-paths problems, since a simple next-to-shortest path decomposes into segments that must be pairwise disjoint. The two vertex-disjoint paths problem is NP-complete in general digraphs~\cite{fhw80} but polynomial in DAGs~\cite{fhw80,shiloach78,tholey12}, and our dynamic program is the standard ``two tokens on a DAG'' technique of~\cite{fhw80} augmented with an additive reward. When each path is additionally required to be a shortest path, the problem is the $k$-disjoint shortest paths problem introduced by Eilam-Tzoreff~\cite{eilam98}, who solved the undirected case $k=2$ with positive lengths and posed the directed case; this was solved by B\'erczi and Kobayashi~\cite{bk17} for $k=2$, extended to nonnegative undirected lengths by Gottschau, Kaiser and Waldmann~\cite{gkw19}, and to every fixed $k$ on undirected graphs by Lochet~\cite{lochet21} and Bentert, Nichterlein, Renken and Zschoche~\cite{bnrz23}. Eilam-Tzoreff also showed that requiring only \emph{one} of two disjoint paths to be shortest gives an NP-complete problem, which is the source of hardness for our fixed-$A$ subproblem.

On \emph{unweighted} graphs, the existence version of the problem (find some non-shortest simple $s$--$t$ path) is the case $k=1$ of the \emph{$k$-longest detour} problem, which asks for an $s$--$t$ path with at least $d_G(s,t)+k$ edges. That problem is fixed-parameter tractable on undirected graphs~\cite{bcdf19,avwwx23} and on planar digraphs~\cite{fgl23,hmps23}; whether it is fixed-parameter tractable on general unweighted digraphs remains open, and CWZ suggest that their techniques may help. (With positive real weights this equivalence fails: a next-to-shortest path may have weight less than $d_G(s,t)+1$.) The $k$ shortest \emph{walks} problem is solvable very efficiently~\cite{eppstein98}, but walks may repeat vertices and are not useful here.

\subsection{Organization}

Section~\ref{sec:prelim} introduces the shortest-path DAG and the completion operation. Section~\ref{sec:fixedA} shows that the fixed-deviation-point subproblem is NP-complete. Section~\ref{sec:structure} develops the local replacement structure of a global optimum, culminating in the uncrossing lemma and the Local Replacement Theorem. Section~\ref{sec:algorithm} defines the representatives, proves that they cover an optimal solution, and gives the dynamic program and the final running-time analysis. Section~\ref{sec:conclusion} concludes with open questions.

\section{The shortest-path DAG and completions}\label{sec:prelim}

Write
\[
d(v)=d_G(s,v),\qquad d_t(v)=d_G(v,t),\qquad D=d(t).
\]
If $D=\infty$ there is no solution. A vertex $v$ is \emph{straight} (the term follows~\cite{bss21,cwz}) if
\[
d(v)+d_t(v)=D,
\]
and an edge $uv$ is \emph{tight} if
\[
d(u)+w(uv)+d_t(v)=D.
\]
We may first delete all vertices that are not reachable from $s$ or cannot reach $t$; in the nontrivial case this does not change the answer, and afterwards $m=\Omega(n)$. Let $H$ be the subgraph formed by all straight vertices and all tight edges. Along every tight edge,
\[
d(v)=d(u)+w(uv)>d(u),
\]
so $H$ is a DAG. A path in $H$ is called a \emph{tight path}. Write $h=|V(H)|$ and $q=|E(H)|$. We write $u\Hreach v$ if $H$ contains a path from $u$ to $v$, allowing $u=v$.

All paths are simple; paths of length zero are allowed. Intersections and unions of paths refer to vertex sets. $W[u,v]$ denotes the subpath of $W$ from $u$ to $v$; a round bracket excludes the corresponding endpoint. $W^\circ$ denotes the set of internal vertices of $W$, i.e.\ $W$ with both endpoints removed.

\begin{lemma}\label{lem:shortest-tight}
If $v$ is a straight vertex, then every shortest $s\to v$ path and every shortest $v\to t$ path is a tight path. In particular, a simple $s$--$t$ path is a shortest path if and only if it is a tight path.
\end{lemma}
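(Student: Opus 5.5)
The plan is to verify tightness edge by edge, using only the triangle inequalities $d(b)\le d(a)+w(ab)$ and $d_t(a)\le w(ab)+d_t(b)$ together with straightness of $v$.

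\emph{Shortest $s\to v$ paths.} Let $P=(s=u_0,u_1,\dots,u_k=v)$ be a shortest $s\to v$ path. Every prefix of a shortest path is shortest, so $d(u_i)=\sum_{j<i}w(u_ju_{j+1})$, and in particular $d(u_{i+1})=d(u_i)+w(u_iu_{i+1})$. The suffix $P[u_i,v]$ gives the bound $d_t(u_i)\le d(v)-d(u_i)+d_t(v)$. Since $v$ is straight, this becomes $d(u_i)+d_t(u_i)\le D$. The reverse inequality $d(u)+d_t(u)\ge D$ holds for every vertex, because concatenating a shortest $s\to u$ walk with a shortest $u\to t$ walk gives an $s$--$t$ walk of at least that length. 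Hence every $u_i$ is straight. For an edge $u_iu_{i+1}$ we then get
\[
d(u_i)+w(u_iu_{i+1})+d_t(u_{i+1})=d(u_{i+1})+d_t(u_{i+1})=D,
\]
so every edge of $P$ is tight and $P$ lies in $H$.

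\emph{Shortest $v\to t$ paths.} The argument is symmetric. Along a shortest $v\to t$ path we have $d_t(u_i)=d_t(u_{i+1})+w(u_iu_{i+1})$. Prefixing with a shortest $s\to v$ path gives $d(u_i)\le d(v)+(d_t(v)-d_t(u_i))$, hence $d(u_i)+d_t(u_i)\le D$, so every $u_i$ is straight. Then $d(u_i)+w+d_t(u_{i+1})=d(u_i)+d_t(u_i)=D$, so every edge is tight.

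\emph{The equivalence.} For the forward direction, note that $s$ is straight since $d(s)=0$ and $d_t(s)=D$; a shortest $s$--$t$ path is then a shortest $s\to t$ path from the straight vertex $s$, so it is tight by the above. For the converse, let $W$ be a tight path. Each of its edges satisfies $w(uv)=d(v)-d(u)$, because $d(u)+w+d_t(v)=D\le d(v)+d_t(v)$ gives $d(u)+w\le d(v)$, while the triangle inequality gives $d(v)\le d(u)+w$. The edge weights along $W$ therefore telescope to $d(t)-d(s)=D$, so $W$ is a shortest path.

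There is no real obstacle here. The only points needing care are the universal inequality $d(u)+d_t(u)\ge D$ and the fact that subpaths of shortest paths are shortest; both follow from positive weights and the absence of negative cycles.
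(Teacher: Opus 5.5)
Your proof is correct, and it takes a slightly different route from the paper's.

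The paper concatenates a shortest $s\to v$ path $I$ with a shortest $v\to t$ path $J$. It notes that $I\circ J$ has length $D$ because $v$ is straight, and argues that $I\circ J$ is simple: a repeated vertex would close a cycle of positive weight, and removing it would give an $s$--$t$ path shorter than $D$. It then concludes that every edge of $I\circ J$ is tight because it lies on a shortest $s$--$t$ path. The final ``if and only if'' is declared immediate, resting on the identity $d(v)=d(u)+w(uv)$ for tight edges stated just before the lemma.

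You never form the concatenated walk. The same ``pass through $v$'' idea appears only as the bound $d(u_i)+d_t(u_i)\le d(v)+d_t(v)=D$, from which you get straightness of each vertex and then tightness of each edge. Because you never need the walk to be simple, your argument does not use strict positivity of the weights; absence of negative cycles suffices. You also spell out what the paper leaves implicit: why edges of a shortest path are tight, and the telescoping argument for tight $\Rightarrow$ shortest.

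The paper's version is shorter, reducing everything to a single splice. Yours is more self-contained and somewhat more general, though that generality is not needed here since $w>0$ throughout.
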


\begin{proof}
Let $I$ be a shortest $s\to v$ path and $J$ a shortest $v\to t$ path. The walk $I\circ J$ has length $D$. If it repeated a vertex, removing a cycle of positive length would give an $s$--$t$ path of length less than $D$, a contradiction. Hence $I\circ J$ is a shortest simple path and all its edges are tight. The same argument applies to $J$. The last sentence follows immediately.
\end{proof}

For straight vertices $A,B$, a pair of vertex-disjoint tight paths
\[
P\colon s\to A,\qquad Q\colon B\to t
\]
is called a \emph{feasible pair} for $(A,B)$.

Given a tight path $P\colon s\to A$ with $A\ne t$, delete from $G$ all vertices of $P\setminus\{A\}$ and then all tight out-edges of $A$; call the resulting graph $G_P$. Define
\[
C(P)=d(A)+d_{G_P}(A,t),
\]
where the value $\infty$ is allowed.

\begin{lemma}[Completion]\label{lem:completion}
If $C(P)<\infty$, then $P$ followed by a shortest $A\to t$ path in $G_P$ is a non-shortest simple $s$--$t$ path of length $C(P)$.
\end{lemma}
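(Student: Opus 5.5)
Let $R$ be a shortest $A\to t$ path in $G_P$; it exists and has length $d_{G_P}(A,t)<\infty$. I will show that the concatenation $W=P\circ R$ is a simple $s$--$t$ path, that its length is $C(P)$, and that it is not a shortest path.

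\emph{Simplicity and length.} Every vertex of $P$ other than $A$ was deleted in forming $G_P$. Hence $R$, which is a simple path in $G_P$, meets $P$ only in $A$, and $W$ is simple. The path $P$ is tight, so $d$ increases along it by exactly the edge weights and $P$ has length $d(A)-d(s)=d(A)$. Therefore $W$ has length $d(A)+d_{G_P}(A,t)=C(P)$.

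\emph{Non-shortestness.} By Lemma~\ref{lem:shortest-tight}, it suffices to show that $W$ is not a tight path. The first edge of $R$ leaves $A$. Since $A\ne t$, the path $R$ has at least one edge, so this first edge exists. It lies in $G_P$, and all tight out-edges of $A$ were removed from $G_P$, so this edge is not tight. Thus $W$ contains a non-tight edge, and by Lemma~\ref{lem:shortest-tight} it is not a shortest path. Equivalently, its length is strictly larger than $D$.

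There is no real obstacle here. The only points needing care are that $A\ne t$ guarantees $R$ is nonempty, and that we use the characterization "shortest $\iff$ tight" from Lemma~\ref{lem:shortest-tight} rather than comparing lengths directly.
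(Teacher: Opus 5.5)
Your proof is correct and takes the same approach as the paper: the concatenation is simple because $G_P$ omits $P\setminus\{A\}$, and it is non-shortest because its first edge after $A$ is a non-tight edge, so Lemma~\ref{lem:shortest-tight} applies. You also spell out two points the paper leaves implicit, namely that $w(P)=d(A)$ for a tight prefix and that $A\ne t$ guarantees a first continuation edge exists.
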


\begin{proof}
A shortest path in the residual graph can be taken simple, and it avoids $P\setminus\{A\}$, so the concatenation is simple. Its first continuation edge is not tight, hence the whole path is not a shortest path.
\end{proof}

\section{The subproblem with a fixed deviation point}\label{sec:fixedA}

If a next-to-shortest path exists, let $\OPT$ denote its length. For a straight vertex $A\ne t$ define
\[
F(A)=\min\{C(P):P\text{ is a tight path }s\to A\}.
\]

\begin{lemma}\label{lem:fixedA-decomp}
If a next-to-shortest path exists, then
\[
\OPT=\min_{A\ne t}F(A).
\]
\end{lemma}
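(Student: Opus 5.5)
We prove the two inequalities $\min_{A\ne t}F(A)\ge\OPT$ and $\min_{A\ne t}F(A)\le\OPT$ separately.

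\emph{Lower bound.} Take any straight $A\ne t$ and any tight path $P\colon s\to A$ with $C(P)<\infty$. By Lemma~\ref{lem:completion}, $P$ followed by a shortest $A\to t$ path in $G_P$ is a non-shortest simple $s$--$t$ path of length $C(P)$. Hence $C(P)\ge\OPT$, and taking minima over $P$ and $A$ gives $\min_{A\ne t}F(A)\ge\OPT$. In particular the minimum is finite whenever the other inequality produces a finite value.

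\emph{Upper bound.} Fix a next-to-shortest path $W$ of length $\OPT$. Since $W$ starts at $s$ and is not tight (Lemma~\ref{lem:shortest-tight}), it has a first non-tight edge. Let $P$ be its maximal tight prefix $s\to A$; then $A$ is the tail of that first non-tight edge, so $A\ne t$.

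\emph{$P$ is admissible.} All edges of $P$ are tight, so every vertex of $P$ is straight. (If $P$ has length zero, then $A=s$, which is straight since $D<\infty$.) Thus $P$ is a tight path in $H$, and $A$ is straight.

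\emph{The suffix survives in $G_P$.} Let $R=W[A,t]$. Because $W$ is simple, $R$ avoids $P\setminus\{A\}$. The first edge of $R$ is the non-tight edge out of $A$. No later edge of $R$ leaves $A$, again by simplicity. So $R$ is an $A\to t$ path in $G_P$, which gives
\[
d_{G_P}(A,t)\le w(R).
\]

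\emph{Conclusion.} Since $P$ is tight, $w(P)=d(A)$. Therefore
\[
C(P)=d(A)+d_{G_P}(A,t)\le w(P)+w(R)=\OPT,
\]
so $F(A)\le\OPT$, which combined with the lower bound gives $\OPT=\min_{A\ne t}F(A)$.

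The only step needing care is the definition of $A$ as the tail of the first non-tight edge rather than an arbitrary vertex. This choice makes $P$ tight, which gives $w(P)=d(A)$ and membership in $H$. It also makes the non-tight edge the first edge of $R$, so deleting the tight out-edges of $A$ does not destroy $R$.
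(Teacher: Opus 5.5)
Your proof is correct and follows the paper's own argument: the lower bound comes from Lemma~\ref{lem:completion}, and the upper bound completes the longest tight prefix of an optimal path with that path's own suffix, which survives in $G_P$ because its edge out of $A$ is not tight. Your extra checks ($A\ne t$, straightness of the prefix vertices, $w(P)=d(A)$, and that the suffix leaves $A$ only once) fill in details the paper leaves implicit.
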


\begin{proof}
By Lemma~\ref{lem:completion}, every finite $C(P)$ is the length of a valid non-shortest path, so $F(A)\ge\OPT$. Conversely, take a next-to-shortest path $W^*$ and let $L\colon s\to A$ be its longest tight prefix. Since $W^*$ is simple, its suffix from $A$ does not touch $L\setminus\{A\}$; since $L$ is the longest tight prefix, the first edge leaving $A$ is not tight and is therefore not deleted in the construction of $G_L$. Hence this suffix lies in $G_L$ and $C(L)\le\OPT$. Thus $F(A)\le\OPT$ for some $A$.
\end{proof}

Nevertheless, computing a single value $F(A)$ is already hard.

\begin{proposition}\label{prop:fixedA-hard}
Given a positively weighted digraph $G$ with integer weights, vertices $s,t$ and a straight vertex $A\ne t$, deciding whether there exists a tight path $P\colon s\to A$ with $C(P)<\infty$ is NP-complete. The edge weights need only take the value $1$ and one integer of polynomial size.
\end{proposition}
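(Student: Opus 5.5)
The plan is to reduce from the NP-complete problem of Eilam-Tzoreff~\cite{eilam98}. Its input is a digraph $G_0$ with unit weights and four distinct terminals $s_1,t_1,s_2,t_2$. The question is whether there exist vertex-disjoint paths $P_1\colon s_1\to t_1$ and $P_2\colon s_2\to t_2$ such that $P_1$ is a shortest $s_1\to t_1$ path in $G_0$ and $P_2$ is arbitrary. I am assuming the directed unit-weight version of this result is available; otherwise I would first pass to it by a standard transformation. Membership in NP is immediate. A certificate is the tight path $P$. We can check in polynomial time that $P$ is tight, using the distances $d$ and $d_t$, and compute $C(P)$ by one shortest-path computation in $G_P$.

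\textbf{Construction.} Start from $G_0$, first deleting all out-edges of $t_1$. This cannot affect the instance: a shortest $s_1\to t_1$ path never leaves $t_1$, and $P_2$ cannot pass through $t_1$ because $t_1\in P_1$. Set $s:=s_1$ and $A:=t_1$, and add a new vertex $t$ together with three edges:
\begin{itemize}
\item $A\to t$ of weight $1$;
\item $A\to s_2$ of weight $1$;
\item $t_2\to t$ of weight $M:=|V(G_0)|+2$.
\end{itemize}
All original edges keep weight $1$. We may assume $d_{G_0}(s_1,t_1)<\infty$, since otherwise the instance is trivially negative.

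\textbf{Computing the tight structure.} Every $s\to t$ path ends either with $A\to t$ or with $t_2\to t$. In the second case its length is at least $M>d_{G_0}(s_1,t_1)+1$. Hence $D=d(A)+1$, where $d(A)=d_{G_0}(s_1,t_1)$, since the new edges out of $A$ cannot shorten any path into $A$. It follows that the straight vertices are exactly $t$ and the vertices on shortest $s_1\to t_1$ paths of $G_0$. The tight edges are the edges of those paths together with $A\to t$. Consequently the tight paths $s\to A$ are precisely the shortest $s_1\to t_1$ paths of $G_0$, and $A$ is straight with $A\ne t$. Moreover the only tight out-edge of $A$ is $A\to t$: the only other out-edge of $A$ is $A\to s_2$, whose head is not straight (a path through $s_2$ reaches $t$ only via $t_2\to t$, so $d(s_2)+d_t(s_2)>D$).

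\textbf{Equivalence.} Fix a tight path $P\colon s\to A$, i.e.\ a shortest $s_1\to t_1$ path. In $G_P$ the edge $A\to t$ is deleted, so the only remaining out-edge of $A$ is $A\to s_2$. A simple $A\to t$ path in $G_P$ cannot re-enter $A$, so its last edge must be $t_2\to t$. Such a path therefore has the form $A\to s_2\leadsto t_2\to t$. Its middle part is an $s_2\to t_2$ path avoiding all of $P$ (avoiding $A$ by simplicity and $P\setminus\{A\}$ by construction of $G_P$), and every such path yields a completion. Hence $C(P)<\infty$ if and only if $P_1:=P$ admits a disjoint $P_2$, which is exactly the Eilam-Tzoreff question. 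All weights are $1$ except the single weight $M$, which is polynomial in the input size, as the proposition requires.

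The main obstacle is the middle step: verifying that the gadget edges create no new tight edges or straight vertices. In particular, one must check that $A\to s_2$ is not tight and that the original out-edges of $t_1$ cannot bypass the forced route through $s_2$. This is exactly why we delete the out-edges of $t_1$ and choose $M$ large. The remaining steps are routine bookkeeping.
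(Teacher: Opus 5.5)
Your reduction is essentially the paper's: the same source problem (the paper takes Eilam-Tzoreff's undirected unit-length vertex-disjoint version and bidirects every edge, which is the ``standard transformation'' you allude to), the same deletion of the out-edges of $t_1$, the same three gadget arcs and the same equivalence argument; giving $A\to s_2$ weight $1$ instead of the paper's $M$ changes nothing. One small correction: your parenthetical claim that $s_2$ is never straight is false whenever $s_2$ lies on a shortest $s_1$--$t_1$ path. Even so, $A\to s_2$ is not tight, simply because, as you already established, every tight edge lies on a shortest $s$--$t$ path of the form $\cdots\to A\to t$. Also, ``the only remaining out-edge of $A$'' should read ``at most one'': if $s_2\in P$, then $A$ has no out-edge in $G_P$ and $C(P)=\infty$, which is consistent with the source instance.
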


\begin{proof}
We reduce from the problem 2D1SP of Eilam-Tzoreff~\cite{eilam98}: given an undirected graph $G_0$ with unit edge lengths and four distinct vertices $s_1,t_1,s_2,t_2$, decide whether there exist vertex-disjoint paths $P_1\colon s_1\to t_1$ and $P_2\colon s_2\to t_2$ such that $P_1$ is a shortest path. Its vertex-disjoint version on undirected graphs with unit edge lengths is NP-complete by Claim~1 in Section~2 of~\cite{eilam98}. If $s_1$ and $t_1$ are disconnected, the instance is trivially negative; in that case we output any fixed negative instance satisfying the promise of the proposition. Assume henceforth $d_{G_0}(s_1,t_1)<\infty$.

Replace each edge of $G_0$ by two oppositely directed arcs of unit weight, and delete all out-arcs of $t_1$. Let $s=s_1$ and $A=t_1$, add a new vertex $t$ and three arcs
\[
A\to t\quad(1),\qquad A\to s_2\quad(M),\qquad t_2\to t\quad(M),
\]
where $M=|V(G_0)|+2$. Let $d_0=d_{G_0}(s_1,t_1)$. Every $s$--$t$ path that uses an arc of weight $M$ is longer than $d_0+1$, so the shortest $s$--$t$ paths of the new graph are exactly ``a shortest $s_1\to t_1$ path in $G_0$ followed by $A\to t$.'' Thus tight paths $s\to A$ correspond exactly to shortest $s_1\to t_1$ paths in $G_0$.

Fix such a tight path $P$, corresponding to $P_1$. In $G_P$ the arc $A\to t$ is deleted, the only possible remaining out-arc of $A$ is $A\to s_2$, and the only possible remaining arc entering the new vertex $t$ is $t_2\to t$; at least one of them is absent if $s_2$ or $t_2$ lies on $P_1$, in which case $C(P)=\infty$. Otherwise a simple $A\to t$ path of $G_P$ starts with $A\to s_2$, ends with $t_2\to t$, and cannot return to $A$, so its middle part is an $s_2\to t_2$ path of $G_0-V(P_1)$; conversely any such path gives a completion. Hence $C(P)<\infty$ if and only if $G_0-V(P_1)$ contains an $s_2\to t_2$ path, i.e.\ a path $P_2$ vertex-disjoint from $P_1$. This proves the reduction. Membership in NP is clear: a tight path $P$ and an $A\to t$ path of $G_P$ form a certificate that can be checked with two shortest-path computations.
\end{proof}

Consequently, unless $\mathrm{P}=\mathrm{NP}$, no polynomial-time algorithm evaluates $F(A)$ for an arbitrary specified $A$ (although the decomposition of Lemma~\ref{lem:fixedA-decomp} is of course valid). Below we use the structure of a global optimum to construct a polynomial-size family of representatives.

\section{Local replacement structure of a global optimum}\label{sec:structure}

Fix a globally optimal non-shortest simple path
\[
W^*=L_0\circ K\circ R_0,
\]
where $L_0\colon s\to A$ is the longest tight prefix, $R_0\colon B\to t$ is the longest tight suffix, and $K=W^*[A,B]$. Neither the first nor the last edge of $K$ is tight. This decomposition follows the optimal-middle-segment view of CWZ; in their layered setting $K$ corresponds to the segment $P_0^*$ between the first and the last back-edge. Here the prefix and the suffix are defined by tight edges of the input graph. Note that in a general graph a forward edge in the sense of CWZ, defined by $d(u)+w(uv)=d(v)$, need not lie on any shortest $s$--$t$ path, so the two notions do not coincide. Let
\[
S=K^\circ\cap V(H).
\]
The non-straight internal vertices of $K$ lie on no tight path. Relative to this fixed $K$, a feasible pair $(P,Q)$ is called \emph{clean} if
\[
(P\cup Q)\cap S=\varnothing,
\]
and \emph{dirty} otherwise. The original pair $(L_0,R_0)$ is clean.

\subsection*{Middle-segment rigidity and barriers}

\begin{lemma}[Middle-segment rigidity]\label{lem:rigidity}
Let $u,v\in S\cup\{A,B\}$ with $u$ preceding $v$ on $K$ and $(u,v)\ne(A,B)$. If there exist vertex-disjoint tight paths
\[
\sigma\colon s\to u,\qquad \tau\colon v\to t
\]
with
\[
\sigma\cap K[u,v]=\{u\},\qquad
\tau\cap K[u,v]=\{v\},
\]
then $K[u,v]$ is a tight path.
\end{lemma}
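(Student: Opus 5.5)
We argue by contradiction with the global optimality of $W^*$. Suppose $\sigma,\tau$ exist as in the statement but $K[u,v]$ is not a tight path. We form the walk
\[
W'=\sigma\circ K[u,v]\circ\tau
\]
and show that it is a simple, non-shortest $s$--$t$ path strictly shorter than $W^*$.

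\emph{Simplicity.} The three pieces are each simple. The path $\sigma$ and $\tau$ are disjoint by hypothesis. The hypotheses $\sigma\cap K[u,v]=\{u\}$ and $\tau\cap K[u,v]=\{v\}$ say that the only shared vertices are the junctions $u$ and $v$. Hence $W'$ is a simple $s$--$t$ path.

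\emph{Non-shortest.} We use Lemma~\ref{lem:shortest-tight}. If $W'$ were a shortest path, all its edges would be tight, and in particular $K[u,v]$ would be a tight path, contrary to assumption. So $W'$ is a non-shortest simple path, and optimality of $W^*$ gives $w(W')\ge \OPT=w(W^*)$.

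\emph{Length comparison.} Since $u,v$ are straight and $\sigma,\tau$ are tight, $w(\sigma)=d(u)$ and $w(\tau)=d_t(v)$. Hence
\[
w(W')=d(u)+w(K[u,v])+d_t(v).
\]
On the other hand,
\[
w(W^*)=w(L_0)+w(K[A,u])+w(K[u,v])+w(K[v,B])+w(R_0).
\]
Here $w(L_0)=d(A)$ and $w(R_0)=d_t(B)$. Also $d(A)+w(K[A,u])\ge d(u)$, because this is the length of an $s\to u$ walk, and similarly $w(K[v,B])+d_t(B)\ge d_t(v)$. Therefore $w(W^*)\ge w(W')$.

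\emph{Strictness.} This is the main point: we need a strict gap to contradict optimality. Since $(u,v)\ne(A,B)$, at least one of $u\ne A$ or $v\ne B$ holds; say $u\ne A$ (the other case is symmetric). The first edge $Ay$ of $K$ lies in $K[A,u]$ and is not tight. Suppose $d(A)+w(K[A,u])=d(u)$. Then $L_0\circ K[A,u]$ is a shortest $s\to u$ walk. Since $u$ is straight, Lemma~\ref{lem:shortest-tight} implies that every shortest $s\to u$ path is tight. This walk is simple, because $W^*$ is simple, so all of its edges are tight, including $Ay$. This is a contradiction. Hence the inequality is strict and $w(W')<w(W^*)$, contradicting optimality of $W^*$. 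Therefore $K[u,v]$ is tight.

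\emph{Remaining detail.} The one point to check is the degenerate case $u=v$, which is excluded since $u$ precedes $v$ on $K$. When $v=B$ with $u\ne A$, the argument runs as above using the $u\ne A$ side.
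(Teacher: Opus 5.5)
Your proof is correct and follows essentially the same route as the paper. You splice $\sigma\circ K[u,v]\circ\tau$, compare its prefix and suffix lengths with those of $W^*$, and get strictness from the non-tight first or last edge of $K$ via Lemma~\ref{lem:shortest-tight}; a non-tight $K[u,v]$ would then give a shorter non-shortest simple path, contradicting optimality, exactly as in the paper (you simply spell out the length bookkeeping in more detail).
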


\begin{proof}
$\sigma\circ K[u,v]\circ\tau$ is a simple $s$--$t$ path. Replacing the $s\to u$ prefix of $W^*$ by $\sigma$ does not increase the length; if $u\ne A$, the original prefix contains the first, non-tight edge of $K$ while $u$ is straight, so by Lemma~\ref{lem:shortest-tight} it is not a shortest $s\to u$ path and the replacement strictly shortens. The same holds for the suffix. Since $(u,v)\ne(A,B)$, at least one side strictly shortens.

If $K[u,v]$ still contained a non-tight edge, the spliced simple path would be non-shortest yet shorter than $W^*$, a contradiction. Hence the spliced path must be a shortest path, and every edge of $K[u,v]$ is tight.
\end{proof}

\begin{corollary}[Barrier property]\label{cor:barrier}
Let $(L,R)$ be any clean feasible pair for $(A,B)$. Then:
\begin{enumerate}[label=\textup{(\roman*)}]
\item Every tight path from $s$ to a vertex of $S$ passes through $R$; every tight path from a vertex of $S$ to $t$ passes through $L$.
\item For every $u\in S$ there is a tight path $B\to u\to A$. More generally, every tight path from a vertex of $L$ to $S$ passes through $R$, and every tight path from $S$ to a vertex of $R$ passes through $L$.
\item For every feasible pair $(P,Q)$ for $(A,B)$,
\[
P\cap S=\varnothing\quad\Longleftrightarrow\quad Q\cap S=\varnothing.
\]
\end{enumerate}
\end{corollary}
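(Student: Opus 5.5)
The plan is to derive all three parts from middle-segment rigidity (Lemma~\ref{lem:rigidity}), using one observation about $K$. A tight path can only contain straight vertices, so it meets $K$ only in $S\cup\{A,B\}$; the non-straight interior vertices of $K$ never matter. In each case I will pick an extreme vertex of an intersection with $S$ along $K$, cut a path there, and apply Lemma~\ref{lem:rigidity} to a pair $(u,v)$ with $u=A$ or $v=B$. The lemma then forces $K[A,v]$ or $K[u,B]$ to be tight. This contradicts the fact that the first and last edges of $K$ are not tight.

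\emph{Part (iii).} This is the most direct use of the lemma, so I would do it first. Suppose $P\cap S=\varnothing$ but $Q\cap S\ne\varnothing$.
\begin{itemize}
\item Let $v$ be the vertex of $Q\cap S$ that comes first along $K$, and set $\tau=Q[v,t]$, $\sigma=P$, $u=A$.
\item $\sigma\cap K[A,v]=\{A\}$, because $P\cap S=\varnothing$ and $B\in Q$ is disjoint from $P$.
\item $\tau\cap K[A,v]=\{v\}$, by the choice of $v$ and because $A\in P$.
\item $\sigma$ and $\tau$ are disjoint, and $(A,v)\ne(A,B)$.
\end{itemize}
Lemma~\ref{lem:rigidity} then makes $K[A,v]$ tight, contradicting that the first edge of $K$ is not tight. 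The converse is symmetric. Take $u$ to be the vertex of $P\cap S$ that comes last along $K$, with $\sigma=P[s,u]$, $\tau=Q$ and $v=B$; the conclusion contradicts non-tightness of the last edge of $K$.

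\emph{Part (i).} This is the same argument with the clean pair $(L,R)$. Suppose a tight path $\pi$ from $s$ to $S$ avoids $R$.
\begin{itemize}
\item Let $u$ be the vertex of $\pi\cap S$ that comes last along $K$; it is not $A$, since $A\notin S$.
\item Put $\sigma=\pi[s,u]$, $\tau=R$, $v=B$.
\item Cleanness gives $R\cap K[u,B]=\{B\}$.
\item The choice of $u$, together with $B\in R$, gives $\sigma\cap K[u,B]=\{u\}$.
\item $\sigma$ and $\tau$ are disjoint because $\pi$ avoids $R$.
\end{itemize}
Rigidity makes $K[u,B]$ tight, a contradiction. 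The statement for tight paths from $S$ to $t$ is symmetric: take the vertex of the path in $S$ that comes first along $K$, use $L$ as $\sigma$, and set $u=A$.

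\emph{Part (ii).} Here I would combine (i) with acyclicity of $H$.
\begin{itemize}
\item For $u\in S$, take a shortest $s\to u$ path; it is tight by Lemma~\ref{lem:shortest-tight}. By (i) it meets $R$ at some $r$, so $B\Hreach r\Hreach u$ (the first step along $R$).
\item Symmetrically, a shortest $u\to t$ path meets $L$ at some $\ell$, so $u\Hreach\ell\Hreach A$ (the last step along $L$).
\item For the general claim, let $\pi$ be a tight path from $\ell\in L$ to $S$. Then $L[s,\ell]\circ\pi$ is a walk in the DAG $H$, hence a simple tight path from $s$ to $S$.
\item By (i) this path meets $R$. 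Since $L\cap R=\varnothing$, the meeting point lies on $\pi$. The statement for paths from $S$ into $R$ is symmetric, appending $R[r,t]$.
\end{itemize}

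The main obstacle is the bookkeeping behind the hypotheses $\sigma\cap K[u,v]=\{u\}$ and $\tau\cap K[u,v]=\{v\}$ of Lemma~\ref{lem:rigidity}. These hold only because tight paths avoid non-straight vertices and because the extreme vertex along $K$ is chosen on the correct side in each case. I would verify each of these containments explicitly rather than appeal to symmetry loosely.
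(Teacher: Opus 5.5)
Your proposal is correct and follows essentially the same route as the paper. Each part is reduced to middle-segment rigidity with $u=A$ or $v=B$ at an extreme vertex of an intersection with $S$, and (ii) is obtained from (i) by splicing with $L$ or $R$ and using the acyclicity of $H$. The only cosmetic difference is in (i): the paper cuts the path at its first vertex in $S$ along the path, rather than at the vertex of $\pi\cap S$ that comes last along $K$, and both choices give $\sigma\cap K[u,B]=\{u\}$.
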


\begin{proof}
We first prove (i). Suppose a tight path $T\colon s\to u$ with $u\in S$ avoids $R$, and let $u_0$ be the first vertex of $T$ that belongs to $S$. Then $T[s,u_0]$ and $R$ splice $K[u_0,B]$ into a simple path. By Lemma~\ref{lem:rigidity}, $K[u_0,B]$ would be tight, but it contains the last edge of $K$, a contradiction. The other statement is symmetric.

For (ii), take any tight path $s\to u$. By (i) it meets $R$; the prefix of $R$ from $B$ to that intersection followed by the remainder of the path is a tight path $B\to u$. The path $u\to A$ is symmetric. The remaining two statements follow directly from (i) after prepending $L$ or appending $R$ to the path.

For (iii), suppose $P\cap S=\varnothing$ but $Q\cap S\ne\varnothing$, and let $u$ be the earliest vertex of $Q\cap S$ along $K$. Then $P$ and $Q[u,t]$ splice $K[A,u]$ into a simple path, so by Lemma~\ref{lem:rigidity} this subsegment would be tight; but it contains the first edge of $K$. For the converse direction take the last vertex of $P\cap S$ along $K$ and use $K[u,B]$.
\end{proof}

\subsection*{The boundary edge and local replacement}

Assume from now on that $S\ne\varnothing$. Let
\[
\mathcal U=\Reach_H(S)=\{v:\exists u\in S,\ u\Hreach v\}.
\]
$\mathcal U$ is closed under tight reachability. By Corollary~\ref{cor:barrier}(ii), $A\in\mathcal U$. Every $u\in S$ satisfies $s\Hreach u$, and $H$ is acyclic, so $s\notin\mathcal U$. Hence along $L_0\colon s\to A$ there is a unique edge entering $\mathcal U$ from outside; call it $x\to c$. Then
\[
x\notin\mathcal U,\qquad c\in\mathcal U,\qquad L_0[c,A]\subseteq\mathcal U.
\]
We call $x\to c$ the \emph{boundary edge} induced by this optimal solution.

Fix any topological numbering of $H$, i.e.\ a bijection
\[
r\colon V(H)\stackrel{\sim}{\longrightarrow}\{0,1,\ldots,h-1\}
\]
such that every tight edge goes from a smaller to a larger number.

\begin{theorem}[Local Replacement Theorem]\label{thm:replacement}
Let $(P,Q)$ be a feasible pair for $(A,B)$ with $x\in P$, and let $y$ be the successor of $x$ on $P$. If
\[
r(y)\ge r(c),
\]
then $(P,Q)$ is clean. Consequently $P\circ K\circ Q$ is a next-to-shortest path of length $\OPT$, and $C(P)=\OPT$.
\end{theorem}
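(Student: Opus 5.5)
We first dispose of the ``consequently'' part, assuming $(P,Q)$ is clean. $P$ avoids $S$, and the non-straight vertices of $K^\circ$ lie on no tight path. $P$ is tight and ends at $A$, so $P\cap K=\{A\}$; likewise $Q\cap K=\{B\}$, and $P\cap Q=\varnothing$. Hence $P\circ K\circ Q$ is simple. Its length is $d(A)+w(K)+d_t(B)=\OPT$, since $L_0,R_0$ are shortest as well. It is non-shortest because the first edge of $K$ is not tight, so its length is exactly $\OPT$. Moreover $K\circ Q$ avoids $P\setminus\{A\}$ and leaves $A$ by a non-tight edge, so it survives in $G_P$. This gives $C(P)\le\OPT$, and Lemma~\ref{lem:completion} gives $C(P)\ge\OPT$. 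The real content is therefore cleanness.

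For cleanness we argue by contradiction. Suppose $(P,Q)$ is dirty; by Corollary~\ref{cor:barrier}(iii) both $P$ and $Q$ meet $S$. Since $x\notin\mathcal U$ and $\mathcal U$ is closed under tight reachability, $P[s,x]$ is disjoint from $\mathcal U$ and every vertex of $P\cap S$ lies on $P(x,A]$. The hypothesis $r(y)\ge r(c)$ rules out $y\Hreach c$ with $y\ne c$. Either $y=c$ (so $P$ uses the boundary edge itself) or the tails $P[y,A]$ and $L_0[c,A]$ leave $x$ in ``incomparable'' directions. We then set up a potential. Among clean feasible pairs $(L,R)$ for $(A,B)$ whose prefix contains the edge $x\to c$, with $(L_0,R_0)$ as the initial one, consider the last vertex $z$ of $Q$ (in the order along $Q$) that lies on $L$. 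This vertex exists: by Corollary~\ref{cor:barrier}(i), $Q$ from its first $S$-vertex to $t$ must cross $L$. Choose $(L,R)$ minimizing the position of $z$ on $Q$.

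The key step, which is the uncrossing lemma, is to splice. Take $Q[B,z]$, then go along $L$ from $z$ to $A$; or take $P$ up to $x$, then $L[c,A]$; and use $L[s,z]$ followed by $Q(z,t]$ and $R$ as barrier paths. From these we build a new feasible pair $(L',R')$ that is still clean, still contains $x\to c$, and whose last intersection with $Q$ lies strictly earlier on $Q$. Cleanness of the new pair will come from rigidity (Lemma~\ref{lem:rigidity}) and the barrier statements in Corollary~\ref{cor:barrier}(i),(ii). The existence of a crossing comes from the same corollary applied to the segment of $Q$ through $S$. Here the assumption $r(y)\ge r(c)$ is used to guarantee that the spliced prefix can be routed through $x\to c$ without colliding with $P[y,A]$. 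A strictly decreasing position on the finite path $Q$ contradicts the minimal choice, so $(P,Q)$ must be clean.

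I expect the main obstacle to be this splicing step. One must check simultaneously three things: the new pair is vertex-disjoint, it remains clean (in particular the new suffix does not pick up a vertex of $S$), and the last intersection really exists and moves strictly backwards rather than merely not forwards. I would first try the case $y=c$ directly, where $P$ and $L_0$ share the boundary edge. I would then reduce the general case to it by exhibiting a tight $x$-to-$A$ route through $c$ that avoids $Q$, using the topological-order condition.
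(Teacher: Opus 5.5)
Your derivation of the ``consequently'' part is correct. Your frame for cleanness, a reference pair whose last common vertex with $Q$ is as early as possible on $Q$, is also the one the paper uses. The gap is the uncrossing step. It carries the entire content of the theorem, you only assert it, and the ingredients you name do not produce it.

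Your barrier $L[s,z]\circ Q(z,t]$ contains $s$, so every tight $s\to A$ path meets it trivially. It also does not start in $S$, so middle-segment rigidity cannot be applied to it. Your splices do not move anything either:
\begin{enumerate}
\item $z$ is reachable from $S$, so it lies on $L[c,A)$. Hence $P[s,x]\circ L[c,A]$ meets $Q$ at the same last vertex $z$.
\item $Q[B,z]\circ L[z,A]$ is a $B\to A$ path through $S$, not a component of a feasible pair.
\end{enumerate}
Settling $y=c$ first does not help. A dirty pair through $xc$ is not a reference pair, so that case needs the same descent; the paper obtains it as Corollary~\ref{cor:edgecert} of the theorem. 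The tight $x\to c\to A$ route avoiding $Q$ that your reduction requires is no easier to find than the theorem itself. Also, for $y\ne c$ the hypothesis only excludes $y\Hreach c$; $c\Hreach y$ remains possible.

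The missing idea is a barrier anchored in $S$. Write $a$ for the last common vertex of $L$ and $Q$; then $a\in L[c,A)$ and $Q[a,t]\cap S=\varnothing$. Let $z_0$ be the earliest vertex of $S$ along $K$ with $z_0\Hreach c$, and take a tight path $\pi\colon z_0\to c$; it meets $L$ only in $c$ and avoids $R$. Put $\tau=\pi\circ L[c,a]\circ Q[a,t]$. Then $\tau\cap K[A,z_0]=\{z_0\}$, and $\tau$ meets every tight $s\to A$ path. Indeed, if such a path $\sigma$ avoided $\tau$ and $u$ were its first vertex on $K[A,z_0]$, then $\sigma[s,u]\circ K[u,z_0]\circ\tau$ would be simple. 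Rigidity would then make $K[u,z_0]$ tight, contradicting either the non-tight first edge of $K$ or the minimality of $z_0$.

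This is where $r(y)\ge r(c)$ is used. $P[s,x]$ avoids $\mathcal U$, vertices of $P$ after $x$ have numbers at least $r(c)$, and vertices of $\pi$ other than $c$ have smaller numbers. Together with $P\cap Q=\varnothing$ this gives $P\cap\tau\subseteq L[c,a)$.

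Let $b$ be the last vertex of $P$ on $\tau$ and $q$ the first vertex of $R$ on $\tau$, so $q\in Q(a,t]$. Unavoidability of $\tau$ together with acyclicity gives $R[B,q]\cap P[b,A]=\varnothing$. The new pair is $\bar L=L[s,b]\circ P[b,A]$ and $\bar R=R[B,q]\circ Q[q,t]$, and it has the required properties:
\begin{enumerate}
\item It keeps $xc$, because $b\in L[c,a)$.
\item It is clean, by Corollary~\ref{cor:barrier}(iii) applied first to $(L,\bar R)$ and then to $(\bar L,\bar R)$.
\item Its intersection $\bar L\cap Q\subseteq L[s,b)$ lies strictly before $a$.
\end{enumerate}
Without this construction your proposal has no mechanism that forces the last intersection to move.
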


In the algorithm this theorem takes the place of the two-edge certificate (the Key Lemma) of CWZ: it does not constrain the second path to pass through a prescribed edge, but guarantees compatibility through the choice of the next hop on the first path.

We first present an uncrossing operation and then prove the theorem by a strict descent of the last intersection. Throughout, $W^*,K,S,x,c$ are fixed. A clean feasible pair for $(A,B)$ whose first path contains the edge $xc$ is called a \emph{reference pair}. The original pair $(L_0,R_0)$ is one. For every reference pair $(L,R)$, $c$ is still the first vertex at which $L$ enters $\mathcal U$: otherwise, following $L$ from an earlier intersection to $x$ would give $x\in\mathcal U$.

The common vertices of two tight paths appear in the same order on both, since otherwise $H$ would contain a directed cycle. Hence the ``last common vertex'' below does not depend on which path's order is used.

\subsection*{Uncrossing and strict descent of the last intersection}

\begin{lemma}[Uncrossing]\label{lem:uncrossing}
Let $(P,Q)$ be a dirty feasible pair for $(A,B)$ with $x\in P$ such that the successor $y$ of $x$ on $P$ satisfies $r(y)\ge r(c)$. For every reference pair $(L,R)$, the set $L\cap Q$ is nonempty; let $a$ be its last common vertex. Then there exists another reference pair $(\bar L,\bar R)$ such that $\bar L\cap Q$ is also nonempty and its last common vertex $\bar a$ strictly precedes $a$ on $Q$. In particular,
\begin{equation}\label{eq:last-intersection}
r(\bar a)<r(a).
\end{equation}
\end{lemma}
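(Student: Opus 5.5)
The first claim, $L\cap Q\ne\varnothing$, comes straight from the barrier property. Since $(P,Q)$ is dirty, Corollary~\ref{cor:barrier}(iii) gives $Q\cap S\ne\varnothing$. Let $u$ be a vertex of $Q\cap S$. Then $Q[u,t]$ is a tight path from $S$ to $t$, so by Corollary~\ref{cor:barrier}(i) it meets $L$. Common vertices of tight paths occur in the same order on both paths, so the last common vertex $a$ is well defined. I would also record where $a$ lies. Every vertex of $Q[u,t]$ belongs to $\mathcal U$, and $L$ enters $\mathcal U$ exactly at $c$. Hence every common vertex of $L$ and $Q[u,t]$, in particular $a$, lies on $L[c,A]$.

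Second, I would locate the dirty part of $P$ relative to $x$. If $P[s,x]$ met $S$, then $x\in\mathcal U$, which is false. So $P$ meets $S$ only after $x$, and $P[y,A]$ contains a vertex $u'\in S$. In particular $y\Hreach u'$. From $r(y)\ge r(c)$ and $x\to c$ being an edge, $y\ne x$. If $y=c$, then $P$ contains the edge $xc$. I would then compare $P$ with $L$ directly: $P[c,A]$ and $L[c,A]$ are two tight $c\to A$ paths inside $\mathcal U$. Rigidity applied to the first vertex of $P$ in $S$ should show that $Q$ cannot reach that vertex without crossing $L$ strictly earlier.

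In the case $y\ne c$, the assumption $r(y)>r(c)$ rules out $y\Hreach c$. This is how I intend to use the topological hypothesis: the route through $y$ cannot pass through $c$, so it must reach $S$ "beside" $L[c,A]$.

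The construction of $(\bar L,\bar R)$ is by barrier splicing on both sides of $L$. I would keep $\bar L[s,c]=L[s,c]$, so that $\bar L$ still contains $xc$. I would then replace the tail $L[a,A]$, or $R$ near $a$, using a piece of $P$ or $Q$ that bypasses $a$. Concretely:
\begin{itemize}
\item take the last vertex $b$ of $Q$ before $a$ that lies on $R$ (it exists and $b\ne a$ via Corollary~\ref{cor:barrier}(ii), since $Q[B,u]$ and $R$ both start at $B$);
\item set $\bar R=Q[B,b]\circ R[b,t]$ or $R[B,b]\circ Q[b,t]$, whichever keeps disjointness from $\bar L$;
\item reroute $\bar L[c,A]$ through the $P$-tail so that it avoids $Q[a,t]$ entirely.
\end{itemize}
Each spliced path is tight because all pieces are tight paths in $H$ and $H$ is a DAG. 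Cleanness of $(\bar L,\bar R)$ follows from Corollary~\ref{cor:barrier}(iii) once one of the two paths is shown to avoid $S$. For $\bar R$, this holds if its pieces come from $R$ and from a part of $Q$ before it enters $S$.

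The main obstacle is the strict descent itself. We must show that $\bar L\cap Q$ is nonempty, which again follows from the barrier property applied to the clean pair $(\bar L,\bar R)$, and that its last element $\bar a$ strictly precedes $a$ on $Q$, i.e.\ $\bar L\cap Q[a,t]=\varnothing$. My first attempt would argue by contradiction. Suppose $\bar L$ meets $Q[a,t]$. Then combining $\bar L$ up to that vertex, $Q$ from there, and $K$ yields a configuration where Lemma~\ref{lem:rigidity} forces a subsegment of $K$ containing its first or last non-tight edge to be tight, or else yields $y\Hreach c$, contradicting $r(y)>r(c)$. Disjointness of $\bar L$ and $\bar R$ would be checked with the same "last intersection" bookkeeping. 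Finally, \eqref{eq:last-intersection} is immediate from the fact that $Q$ is a tight path and $r$ is a topological numbering.
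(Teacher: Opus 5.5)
There is a genuine gap. The easy part is right: $Q$ meets $S$ by Corollary~\ref{cor:barrier}(iii), its tail from $S$ meets $L$ by (i), and $a\in L[c,A)$ because that tail lies in $\mathcal U$, which $L$ enters at $c$. The overall shape is also right: $\bar L$ follows $L$ through $xc$ and then switches to a tail of $P$, $\bar R$ splices $R$ with $Q$, and cleanness comes from Corollary~\ref{cor:barrier}(iii).

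What is missing is what makes the lemma work. Nothing in your plan guarantees that $P$ returns to $L$ at a vertex of $L[c,a)$ at all. Nothing guarantees that $\bar L$ and $\bar R$ are disjoint either: these are left as ``reroute \dots'' and ``whichever keeps disjointness''. Your closing contradiction argument is a hope, not a proof.

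The missing idea is a barrier path. Let $z$ be the earliest vertex of $S$ along $K$ with $z\Hreach c$, and take a tight path $\pi\colon z\to c$. It meets $L$ only at $c$ and misses $R$. Set $\tau=\pi\circ L[c,a]\circ Q[a,t]$. Then $\tau\cap K[A,z]=\{z\}$, and rigidity together with the minimality of $z$ shows that every tight $s\to A$ path meets $\tau$.

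This is exactly where $r(y)\ge r(c)$ is used. Vertices of $\pi$ other than $c$ have numbers below $r(c)$, vertices of $P$ after $x$ have numbers at least $r(y)$, and $P[s,x]$ misses $\mathcal U$. So $P$ avoids $\pi\setminus\{c\}$. Since $P$ also avoids $Q$, it must meet $\tau$ inside $L[c,a)$. Your remark that $y$ cannot reach $c$ is the right germ, but it only becomes useful once this barrier exists. No separate case $y=c$ is needed.

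Second, your splice vertex for the second path is on the wrong side of $a$. You splice $Q$ and $R$ at their last common vertex preceding $a$. If that vertex is $B$, your two options reduce to $\bar R=Q$, which is dirty, and $\bar R=R$, which nothing prevents from meeting the tail of $P$ that $\bar L$ is supposed to use.

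The paper proceeds as follows:
\begin{itemize}
\item Let $b$ be the last vertex of $P$ on $\tau$, so $b\in L[c,a)$. Let $q$ be the first vertex of $R$ on $\tau$; since $R$ misses $\pi$ and $L$, $q\in Q(a,t]$.
\item Then $\bar R=R[B,q]\circ Q[q,t]$ avoids $L$ and is clean. Also $\bar L=L[s,b]\circ P[b,A]$ still contains $xc$.
\item The disjointness $R[B,q]\cap P[b,A]=\varnothing$ again comes from the barrier. Let $I$ be any tight $s\to B$ path. It misses $\tau$, since $B\Hreach z$, every vertex of $\tau$ is reachable from $z$, $B\ne z$, and $H$ is acyclic. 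So a common vertex $v$ of $R[B,q]$ and $P[b,A]$ would give the tight $s\to A$ path $I\circ R[B,v]\circ P[v,A]$ avoiding $\tau$, which is impossible.
\item Finally, $\bar L\cap Q\subseteq L[s,b)$, which gives the strict descent.
\end{itemize}
Without the barrier $\tau$, the use of $r(y)\ge r(c)$ to force $P$ into $L[c,a)$, and the splice of $R$ at its first contact with $\tau$ after $a$, your outline does not yet prove the lemma.
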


\begin{proof}
\textbf{Constructing the barrier.}
By Corollary~\ref{cor:barrier}(iii), $Q\cap S\ne\varnothing$. By (i), the suffix of $Q$ from any vertex of $S$ to $t$ passes through $L$, so $a$ exists and is reachable from $S$. Moreover $A\notin Q$, hence
\[
a\in L[c,A),\qquad Q[a,t]\cap S=\varnothing.
\]
The second relation holds because if this tail met $S$ again, it would have to pass through $L$ once more.

Let $z$ be the earliest vertex along $K$ with $z\in S$ and $z\Hreach c$, and take a tight path $\pi\colon z\to c$. By the boundary property of $c$, $\pi\cap L=\{c\}$. If $\pi$ met $R$, appending the suffix of $R$ at the intersection would give a tight path $S\to t$ avoiding $L$, violating Corollary~\ref{cor:barrier}(i). Hence $\pi\cap R=\varnothing$.

Set
\[
\tau=\pi\circ L[c,a]\circ Q[a,t].
\]
It is a directed walk in $H$, hence a simple tight path. By the minimality of $z$, the cleanness of the reference pair and $Q[a,t]\cap S=\varnothing$,
\[
\tau\cap K[A,z]=\{z\}.
\]
Here $A\notin\tau$, because $\pi\cap L=\{c\}$, $c\ne A$, $a\ne A$ and $A\notin Q$.

We claim that $\tau$ meets every tight path $s\to A$. Otherwise take a tight path $\sigma\colon s\to A$ avoiding $\tau$, and let $u$ be the first vertex at which it meets $K[A,z]$. Such $u$ exists, and $u\ne z$. Then
\[
\sigma[s,u]\circ K[u,z]\circ\tau
\]
is a simple path. $K[u,z]$ is a nontrivial proper subsegment, so by middle-segment rigidity it must be tight. If $u=A$ it contains the first, non-tight edge of $K$, a contradiction; otherwise $u\in S$ precedes $z$ along $K$ and $u\Hreach z\Hreach c$, contradicting the choice of $z$. Hence $\tau$ is an unavoidable barrier.

\medskip\noindent\textbf{Locating the splice points.}
We first show $P\cap\pi\subseteq\{c\}$. $P[s,x]$ does not meet $\pi$, for otherwise $x\in\mathcal U$. The vertices of $P$ after $x$ have numbers at least $r(y)\ge r(c)$, whereas the vertices of $\pi$ other than $c$ have numbers below $r(c)$. Together with $P\cap Q=\varnothing$ we obtain
\[
\varnothing\ne P\cap\tau\subseteq L[c,a).
\]
Let $b$ be the last vertex along $P$ belonging to $\tau$. On the other hand, $R$ avoids $\pi$ and $L$, but $t\in R\cap\tau$. Let $q$ be the first vertex along $R$ belonging to $\tau$. Then
\[
\begin{aligned}
&b\in L[c,a), &&P[b,A]\cap\tau=\{b\},\\
&q\in Q(a,t], &&R[B,q]\cap\tau=\{q\}.
\end{aligned}
\]
Since $P\cap Q=\varnothing$, we have $b\ne q$.

Take any tight path $I\colon s\to B$. Corollary~\ref{cor:barrier}(ii) gives $B\Hreach z$ and $B\ne z$. If $I$ met $\tau$ at a vertex $v$, then $B\Hreach z\Hreach v\Hreach B$ would be a directed cycle. Hence $I\cap\tau=\varnothing$. If $R[B,q]$ and $P[b,A]$ met at a vertex $v$, then $v\notin\tau$ and
\[
I\circ R[B,v]\circ P[v,A]
\]
would be a tight path $s\to A$ avoiding $\tau$, a contradiction. Therefore
\begin{equation}\label{eq:splice-disjoint}
R[B,q]\cap P[b,A]=\varnothing.
\end{equation}

\Needspace{7\baselineskip}
\medskip\noindent\textbf{Exchanging the tails.}
First set
\[
\bar R=R[B,q]\circ Q[q,t].
\]
Since $q$ is the first contact of $R$ with $\tau$ and $Q[q,t]\subseteq\tau$, the two pieces meet only at $q$. Moreover $q$ lies after $a$ on $Q$, so $\bar R$ avoids $L$. Thus $(L,\bar R)$ is feasible. $L$ is clean, and Corollary~\ref{cor:barrier}(iii) shows that $\bar R$ is clean as well.

Next set
\[
\bar L=L[s,b]\circ P[b,A].
\]
If the two pieces shared a vertex other than $b$, $H$ would contain a directed cycle, so $\bar L$ is a simple tight path. By~\eqref{eq:splice-disjoint} and $P\cap Q=\varnothing$, $P[b,A]$ avoids $\bar R$. Hence $(\bar L,\bar R)$ is feasible; since $\bar R$ is clean, Corollary~\ref{cor:barrier}(iii) shows that $\bar L$ is clean too. Also $b\in L[c,A)$, so $\bar L$ still contains the edge $xc$.

Finally, $Q$ meets $S$, so the barrier property gives $\bar L\cap Q\ne\varnothing$. At the same time
\[
\bar L\cap Q=L[s,b]\cap Q\subseteq L[s,b).
\]
Every new intersection is an old one and lies before $b$, and $b$ lies before $a$. Hence the last intersection $\bar a$ strictly precedes $a$ on $Q$, which gives~\eqref{eq:last-intersection}.
\end{proof}

\subsection*{Proof of the Local Replacement Theorem}

\begin{proof}[Proof of Theorem~\ref{thm:replacement}]
Suppose for contradiction that $(P,Q)$ is dirty. Starting from the reference pair $(L_0,R_0)$, apply Lemma~\ref{lem:uncrossing} repeatedly. Each resulting pair is again clean and keeps the edge $xc$; the last intersection of its first path with $Q$ always exists and moves to a strictly earlier position along $Q$. If these intersections are $a_0,a_1,\ldots$, then
\[
r(a_0)>r(a_1)>r(a_2)>\cdots\ge0,
\]
contradicting the finiteness of $Q$. Hence $(P,Q)$ is clean.

Tight paths do not pass through the non-straight internal vertices of $K$; cleanness excludes the straight internal vertices, and together with $P\cap Q=\varnothing$ the path $P\circ K\circ Q$ is a simple non-shortest path with
\[
w(P\circ K\circ Q)=d(A)+w(K)+d_t(B)=\OPT.
\]
$K\circ Q$ is also an available completion in $G_P$, so $C(P)\le\OPT$. Lemma~\ref{lem:completion} gives the reverse inequality, hence $C(P)=\OPT$.
\end{proof}

\begin{corollary}\label{cor:edgecert}
Every feasible pair for $(A,B)$ whose first path contains the edge $xc$ can be combined with the same middle segment $K$ into an optimal next-to-shortest path, and satisfies $C(P)=\OPT$.
\end{corollary}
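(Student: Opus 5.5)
The plan is to obtain the corollary directly from Theorem~\ref{thm:replacement} by checking its hypothesis. Let $(P,Q)$ be a feasible pair for $(A,B)$ whose first path $P$ contains the edge $xc$.

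First, $x\in P$ holds trivially. Next, let $y$ be the successor of $x$ on $P$. Since $P$ is simple and contains the edge $xc$, this successor is $y=c$. Hence $r(y)=r(c)$, and in particular $r(y)\ge r(c)$.

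The hypothesis of Theorem~\ref{thm:replacement} is therefore satisfied. The theorem yields that $(P,Q)$ is clean. It also yields that $P\circ K\circ Q$ is a next-to-shortest path of length $\OPT$, and that $C(P)=\OPT$. These are exactly the two assertions of the corollary.

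I expect no real obstacle here. The only point to state explicitly is that a simple path has a unique successor of $x$, so containing the edge $xc$ forces $y=c$. The assumption $S\ne\varnothing$, under which $x$ and $c$ are defined, is inherited from the surrounding section.
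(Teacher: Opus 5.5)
Your proposal is correct and is essentially the paper's own proof: containing the edge $xc$ forces the successor of $x$ on $P$ to be $y=c$, so $r(y)=r(c)$ and Theorem~\ref{thm:replacement} applies directly. Your remark that $x$ and $c$ are defined only under the standing assumption $S\ne\varnothing$ is sound bookkeeping, which the paper leaves implicit.
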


\begin{proof}
In this case the next hop is $c$ itself, so $r(y)=r(c)$, and Theorem~\ref{thm:replacement} applies directly.
\end{proof}

\Needspace{12\baselineskip}
\section{Representatives and the algorithm}\label{sec:algorithm}

\subsection*{Latest feasible next hop}

We keep the topological numbering $r$ fixed in Section~\ref{sec:structure}. For straight vertices $A,B,x$ with $A\ne t$, define the set of feasible next hops
\[
Y(A,B,x)=\{y:\exists\text{ a feasible pair }(P,Q)\text{ for }(A,B)\text{ with }xy\in E(P)\}.
\]
If $Y(A,B,x)\ne\varnothing$, take the vertex $y$ of maximum topological number $r(y)$ in it and keep the first path of one feasible pair realizing $xy$. If $x=A$ and a feasible pair for $(A,B)$ exists, additionally keep the first path of an arbitrary such pair. All kept first paths form the family of \emph{representatives} $\pathset$; clearly
\[
|\pathset|=O(n^3).
\]

\begin{theorem}[Coverage]\label{thm:coverage}
If a next-to-shortest path exists, then
\[
\min_{P\in\pathset}C(P)=\OPT.
\]
\end{theorem}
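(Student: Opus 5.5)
The inequality $\min_{P\in\pathset}C(P)\ge\OPT$ is immediate. Every $P\in\pathset$ is a tight path $s\to A$ with $A\ne t$, so by Lemma~\ref{lem:completion} each finite $C(P)$ is the length of a non-shortest simple $s$--$t$ path. The work is the reverse inequality: we must exhibit one representative $P$ with $C(P)\le\OPT$. To do so we fix a globally optimal $W^*=L_0\circ K\circ R_0$ with endpoints $A,B$ of the middle segment, exactly as in Section~\ref{sec:structure}, and split into two cases according to whether $S=K^\circ\cap V(H)$ is empty.

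\emph{Case $S=\varnothing$.} Here the $x=A$ clause of the definition of $\pathset$ applies. Since $(L_0,R_0)$ is a feasible pair for $(A,B)$ and $A\ne t$ (the last edge of $K$ is non-tight, so $A\ne B$, and $A\ne t$ because $K$ leaves $A$), $\pathset$ contains the first path $P$ of some feasible pair $(P,Q)$ for $(A,B)$. Tight paths avoid the non-straight interior vertices of $K$, and $S=\varnothing$, so $P$ and $Q$ avoid $K^\circ$. Together with $P\cap Q=\varnothing$ this shows that $P\circ K\circ Q$ is simple. Because $K$ starts with a non-tight edge, $K\circ Q$ is an $A\to t$ path in $G_P$: it avoids $P\setminus\{A\}$, and its first edge is not a deleted tight out-edge. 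Its length is $w(K)+d_t(B)$, so $C(P)\le d(A)+w(K)+d_t(B)=\OPT$. One point needs checking: $A\notin Q$. This holds because $A$ is the endpoint of $P$ and $P\cap Q=\varnothing$.

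\emph{Case $S\ne\varnothing$.} Take the boundary edge $x\to c$ on $L_0$. Since $x\in L_0$ and $x\notin\mathcal U$, while $A\in\mathcal U$, we have $x\ne A$, and $x$ is straight. The pair $(L_0,R_0)$ witnesses $c\in Y(A,B,x)$, so this set is nonempty. Hence $\pathset$ contains a first path $P$ of some feasible pair $(P,Q)$ for $(A,B)$ with $xy\in E(P)$, where $y$ has maximum topological number in $Y(A,B,x)$. In particular $r(y)\ge r(c)$. Theorem~\ref{thm:replacement} applies verbatim to $(P,Q)$ and gives $C(P)=\OPT$.

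The only delicate step is the bookkeeping in the $S\ne\varnothing$ case. The representative for the triple $(A,B,x)$ is chosen using only information available to the algorithm, with no knowledge of $c$, yet it must still satisfy the hypothesis of Theorem~\ref{thm:replacement}. This works because the theorem's hypothesis is monotone in $r(y)$ and $c$ itself is feasible. The empty-$S$ case, which the theorem does not cover, is handled by the direct splicing argument above. That argument uses only the facts that tight paths miss non-straight vertices and that the first edge of $K$ is non-tight.
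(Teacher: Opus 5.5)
Your proposal is correct and follows the paper's proof: the same lower bound via Lemma~\ref{lem:completion} and the same case split on $S$. For $S=\varnothing$ you use the $x=A$ representative, and for $S\ne\varnothing$ you combine $c\in Y(A,B,x)$, the maximality of $r(y)$, and Theorem~\ref{thm:replacement}. The only difference is that you spell out the splice $P\circ K\circ Q$ and the membership of $K\circ Q$ in $G_P$, which the paper leaves implicit by noting that every feasible pair is clean when $S=\varnothing$.
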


\begin{proof}
By Lemma~\ref{lem:completion}, every finite $C(P)$ is the length of a valid non-shortest path, so the left-hand side is at least $\OPT$.

Fix the global optimal decomposition $W^*=L_0KR_0$ of Section~\ref{sec:structure}. If $S=\varnothing$, every feasible pair for $(A,B)$ is clean, so any representative with $x=A$ can be used.

Suppose $S\ne\varnothing$ and let $x\to c$ be the boundary edge. The original clean pair $(L_0,R_0)$ shows that
\[
c\in Y(A,B,x).
\]
Hence the next hop $y$ chosen by the algorithm for this triple satisfies $r(y)\ge r(c)$. By Theorem~\ref{thm:replacement}, its representative $P$ satisfies $C(P)=\OPT$.
\end{proof}

\subsection*{A two-path dynamic program with rewards}

The actual algorithm works directly with the topological numbering $r$ fixed in Section~\ref{sec:structure} and maximizes $r(y)$ within each $Y(A,B,x)$.

We use the standard ``two tokens on a DAG'' dynamic program; see Fortune, Hopcroft and Wyllie~\cite{fhw80}. Fix a straight vertex $A\ne t$. Assign to the tight edges of the first path an arbitrary additive reward $\rho\colon E(H)\to\R$. Define
\[
T_A^\rho(u,v)=\max\{\rho(P):
P\colon u\to A,\ Q\colon v\to t\text{ are vertex-disjoint tight paths}\},
\]
with value $-\infty$ if infeasible, using the convention $\max\varnothing=-\infty$.

\begin{lemma}[Two-path DP with rewards]\label{lem:dp}
$T_A^\rho$ can be computed in $O(hq)$ time and $O(h^2)$ space.
\end{lemma}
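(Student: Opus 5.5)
We compute $T_A^\rho$ by the standard two-token recursion on the DAG $H$, always advancing the token whose current vertex is earlier in the topological numbering $r$. Disjointness then only has to be checked locally: two tokens sitting on a common vertex are forbidden.

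\textbf{Base cases.} We fill a table indexed by pairs $(u,v)$ of vertices of $H$.
\begin{itemize}
\item Set $T_A^\rho(u,v)=-\infty$ whenever $u=v$.
\item For $u=A$ and $v\ne A$, set $T_A^\rho(A,v)=0$ if $v\Hreach t$ by a tight path avoiding $A$, and $-\infty$ otherwise.
\item For $v=t$ and $u\ne t$, set $T_A^\rho(u,t)=\max\{\rho(P): P\colon u\to A \text{ tight}, t\notin P\}$. This is a single longest-path computation in the DAG $H-t$ with edge weights $\rho$.
\end{itemize}
All base values together cost $O(h\,q)$ time: one reachability search in $H-A$ from each $v$, and one DAG longest-path pass toward $A$. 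In fact $t\notin P$ is automatic, since $t$ is a sink of $H$ and $A\ne t$. Likewise $A\notin Q$ is automatic once we check that $A\not\Hreach t$ through... Rather than rely on such shortcuts, I would simply let the general recursion handle these cases by treating the terminal conditions as tokens that have stopped.

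\textbf{Recursion.} Let $u\ne v$ and assume that neither token has finished.
\begin{itemize}
\item If $r(u)<r(v)$ and $u\ne A$, set
\[
T_A^\rho(u,v)=\max_{uu'\in E(H)}\bigl(\rho(uu')+T_A^\rho(u',v)\bigr).
\]
\item Otherwise advance $v$ along its out-edges $vv'$, provided $v\ne t$.
\item When one token has reached its target, only the other token moves. When both have reached their targets, the value is $0$.
\end{itemize}
We evaluate the pairs in decreasing order of $\min(r(u),r(v))$ to respect the dependencies.

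\textbf{Correctness.} This is the key step. Suppose $r(u)<r(v)$. Then every vertex of a tight path $Q$ from $v$ has topological number at least $r(v)>r(u)$, so $u\notin Q$. Hence moving $u$ to $u'$ preserves disjointness from every vertex that $Q$ has not yet visited. The vertices $Q$ has already "passed" have numbers below $r(v)$; the remaining part of $P$ from $u'$ could still hit them, but those vertices are not on the remaining suffix of $Q$, so that is harmless. Symmetrically, advancing $v$ cannot collide with past vertices of $P$, because those have smaller numbers. The only possible collision is therefore two tokens landing on the same vertex, and that case is excluded by $T=-\infty$ on the diagonal.

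Consequently, any disjoint pair $(P,Q)$ decomposes uniquely into this interleaved sequence of moves, and the total reward is the sum of the $\rho$-values of the moves of the first token. Conversely, every sequence of moves in the recursion yields a disjoint pair. So induction on $\min(r(u),r(v))$ gives exactly the maximum in the definition of $T_A^\rho$. When $u=A$ the first token stops, and the second must still avoid $A$; this is enforced because it cannot step onto the token parked at $A$. Similarly, once $v=t$ the first token can never reach $t$, since $t$ is the sink and $A\ne t$.

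\textbf{Complexity and the main obstacle.} State $(u,v)$ scans the out-edges of a single vertex, either $u$ or $v$. Summing over all pairs gives
\[
\sum_{u,v}\bigl(\deg^+(u)+\deg^+(v)\bigr)=O(hq),
\]
and the table uses $O(h^2)$ space. The main obstacle is making the "advance the earlier token" invariant fully rigorous. Specifically, one must show that disjointness is never violated by a vertex already visited by the other path. This is exactly where the topological monotonicity of tight paths is used, and I would state it as a short separate claim before the induction.
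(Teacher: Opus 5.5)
Your approach is the paper's: the same two-token recursion on $H$ (advance the token with the smaller topological number, park the first token at $A$ and the second at $t$, put $-\infty$ on the diagonal). It rests on the same key fact, that a prepended vertex lies topologically below every vertex of the other suffix, and gives the same $O(hq)$ count. However, two statements are wrong as written and need repair.

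\textbf{Evaluation order.} Decreasing $\min(r(u),r(v))$ is not a valid evaluation or induction order. Suppose the first token is parked at $A$ and $r(v)>r(A)$. Then the state $(A,v)$ depends on the states $(A,z)$ for tight edges $vz$, which have the same minimum $r(A)$. With arbitrary tie-breaking you may therefore read an entry that has not been computed yet. Every move strictly increases one coordinate and leaves the other unchanged, so ordering by decreasing $r(u)+r(v)$ works. This is what the paper's ``reverse topological order on both coordinates'' amounts to.

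\textbf{The ``harmless'' remark.} You say that the remaining part of $P$ may hit vertices already passed by $Q$, and that this is harmless. That is false: such a hit would make the assembled full paths intersect. The correct statement is that it cannot happen. Every vertex a token leaves has, from then on, a smaller number than the other token's position. The only exception is when the other token is parked at its own target, and the vertex being left is not that target. So a vertex left by one token never lies on the remaining part of the other path.

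It is simpler to drop the forward simulation and argue purely by induction on suffix pairs, as $T_A^\rho$ is defined. Besides acyclicity, prepending $u$ needs only $u\notin Q$, and prepending $v$ needs only $v\notin P$. In the parked cases only the diagonal condition is needed. You already prove all of these.

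Two minor points. The per-state overhead adds $O(h^2)$, which is absorbed into $O(hq)$ because $q\ge h-1$. Your explicit base cases are correct but redundant once you use the parked-token convention.
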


\begin{proof}
The boundary conditions are
\[
T_A^\rho(A,t)=0,\qquad T_A^\rho(u,u)=-\infty.
\]
Apart from the terminal state, if $u=A$ we advance the second path; if $v=t$ we advance the first path; otherwise we advance the one whose current vertex has the smaller topological number.

If the first path is advanced,
\[
T_A^\rho(u,v)=
\max_{uy\in E(H),\ y\ne v}
\bigl(\rho(uy)+T_A^\rho(y,v)\bigr).
\]
If the second path is advanced,
\[
T_A^\rho(u,v)=
\max_{vz\in E(H),\ z\ne u}
T_A^\rho(u,z).
\]
We argue by induction in reverse topological order on both coordinates, and only discuss the first case. Deleting the first vertex $u$ of the first path yields a disjoint pair of suffixes starting at $(y,v)$. Conversely, when $u$ is prepended, its topological number is smaller than that of $y$, and smaller than that of $v$ whenever this is needed; neither suffix can return to $u$. Hence the recurrence is both necessary and sufficient. Since the reward is additive, the maximum satisfies the same recurrence.

There are $O(h^2)$ states; each tight edge is combined with at most $h$ positions of the other coordinate, so the total number of successor checks is $O(hq)$. Every vertex of $H$ lies on some tight $s$--$t$ path, so $q\ge h-1$ and the total time is $O(h^2+hq)=O(hq)$.
\end{proof}

For each straight vertex $x$, take the local reward
\[
\rho_x(uv)=
\begin{cases}
r(v)+1,&u=x,\\
0,&u\ne x.
\end{cases}
\]
A tight path leaves $x$ at most once, so if the first path passes through $x$ and continues, its total reward is exactly the number of the next hop plus one; otherwise it is zero. Thus, for fixed $(A,x)$, the table $T_A^{\rho_x}(s,B)$ yields the representatives for all $B$ simultaneously. If a feasible first path through $x$ exists, the maximum corresponds to the feasible next hop of maximum number; if not, a zero-reward representative is merely an additional valid candidate.

\paragraph{Algorithm.}
Compute $d$, $d_t$, the tight graph $H$ and the topological numbering $r$, and set $\mathrm{best}=\infty$. Then, for every straight vertex $A\ne t$ and every straight vertex $x$, compute the table $T_A^{\rho_x}$. For every straight vertex $B\ne s,A$ with $T_A^{\rho_x}(s,B)>-\infty$, recover a feasible pair attaining the maximum, take its first path $P$, compute $C(P)$ and update $\mathrm{best}$ with it. If in the end $\mathrm{best}=\infty$, report that no next-to-shortest path exists; otherwise output a path attaining this value.

\begin{theorem}\label{thm:complexity}
The algorithm above is correct and can be implemented in
\[
O\bigl(n^3(m+n\log n)\bigr)
\]
time with working space $O(n+m+h^2)$. If all edge weights equal $1$, the running time is $O(n^3m)$.
\end{theorem}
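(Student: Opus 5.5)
The proof splits into correctness and cost accounting; correctness is mostly assembled from earlier results.

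\emph{Correctness.} The first step is to check that the loop over $(A,x)$ with table $T_A^{\rho_x}$ produces, for each $B$, a path in the family $\pathset$ or a harmless extra candidate.

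\begin{itemize}
\item Fix $(A,B,x)$. By Lemma~\ref{lem:dp}, $T_A^{\rho_x}(s,B)$ is the maximum of $\rho_x(P)$ over feasible pairs $(P,Q)$ for $(A,B)$.
\item Because a tight path leaves $x$ at most once, $\rho_x(P)=r(y)+1$ when $P$ uses an edge $xy$, and $\rho_x(P)=0$ otherwise.
\item Hence if $Y(A,B,x)\ne\varnothing$, the recovered first path uses the edge $xy$ with $r(y)$ maximal. This is exactly the representative in the definition of $\pathset$.
\item For $x=A$, no path $s\to A$ continues after $A$, so the reward is identically $0$. The recovered path is then an arbitrary feasible first path, which matches the extra representative for $x=A$.
\item Any other recovered path is still a tight $s\to A$ path. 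By Lemma~\ref{lem:completion} its value $C(P)$, if finite, is the length of a valid non-shortest path, so it cannot push $\mathrm{best}$ below $\OPT$.
\end{itemize}

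Theorem~\ref{thm:coverage} then gives $\mathrm{best}=\OPT$ whenever a next-to-shortest path exists. If none exists, every $C(P)$ is infinite by Lemma~\ref{lem:completion}, so $\mathrm{best}=\infty$ and the algorithm correctly reports that no such path exists. Two degenerate cases need a short check:
\begin{itemize}
\item The restriction $B\ne s,A$ loses nothing: $B=A$ is never feasible since $P\cap Q=\varnothing$, and $B=s$ is never feasible since $s\in P$.
\item The case $A=t$ is excluded by the definition of $C$.
\end{itemize}
The output path is $P$ concatenated with the shortest path in $G_P$ that attains $\mathrm{best}$.

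\emph{Running time.} The preprocessing consists of Dijkstra from $s$ and from $t$ (on the reversed graph), building $H$, and computing $r$. This costs $O(m+n\log n)$.

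There are $O(h^2)$ pairs $(A,x)$. For each pair the table costs $O(hq)=O(nm)$ by Lemma~\ref{lem:dp}, for $O(h^3q)=O(n^3m)$ in total.

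Recovering a maximizing pair for a fixed $B$ means walking back through the table along the recurrence. Each step inspects the out-edges of the current vertex, so the walk costs $O(q)$. If needed, we store argmax pointers alongside the table to make this explicit.

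The completion $C(P)$ for each of the $O(h^3)$ triples $(A,x,B)$ needs one construction of $G_P$ in $O(n+m)$ and one Dijkstra in $O(m+n\log n)$. With unit weights, breadth-first search replaces Dijkstra and costs $O(m)$. These completions dominate and give $O(n^3(m+n\log n))$, respectively $O(n^3m)$.

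\emph{Space.} Only one table $T_A^{\rho_x}$ is kept at a time, and it uses $O(h^2)$ space. The graphs $G$, $H$, and the current $G_P$ use $O(n+m)$. Apart from this we keep only the current best value and the witnessing pair, which is $O(n)$.

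The main obstacle is reconciling path recovery with the space bound. We recover the pair for each $B$ right after the table for $(A,x)$ is computed and before that table is discarded. The walk-back reads only stored table entries and edges, costs $O(q)\le O(m)$ per $B$, and is therefore dominated by the completion step. We also handle ties in the maximum arbitrarily, since coverage only needs $r(y)\ge r(c)$.
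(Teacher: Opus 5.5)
Your proposal is correct and follows the paper's proof. Correctness comes from Theorem~\ref{thm:coverage} plus Lemma~\ref{lem:completion}; the cost is $O(h^2)$ tables at $O(hq)$ each, with the $O(h^3)$ completions dominating; and keeping one table at a time gives the space bound. Your explicit check that the reward-maximizing pair realizes the representative family, including the zero-reward case $x=A$, makes precise what the paper leaves implicit. One point should be said explicitly: the unit-weight bound $O(n^3m)$ needs $m=\Omega(n)$, since building $G_P$ costs $O(n+m)$ and you claim BFS costs $O(m)$, and this holds because of the pruning of vertices not reachable from $s$ or not reaching $t$ in Section~\ref{sec:prelim}.
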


\begin{proof}
Theorem~\ref{thm:coverage} guarantees that the candidates contain the optimal value, and Lemma~\ref{lem:completion} guarantees that every finite candidate is valid; hence the algorithm is correct.

We first compute $d$ and $d_t$ by two runs of Dijkstra's algorithm, stop if $d(t)=\infty$, and delete the vertices not reachable from $s$ or not reaching $t$. One table $T_A^{\rho_x}$ takes $O(hq)$ time; together with each finite nonterminal entry we store a successor attaining the maximum, so that a representative is recovered by following these pointers from $(s,B)$ until the terminal entry $(A,t)$; they advance one coordinate at a time and visit at most $h$ entries. There are $O(h^2)$ pairs $(A,x)$, so all tables take $O(h^3q)$ time. Each triple $(A,B,x)$ recovers at most one representative, for $O(h^3)$ in total. Each recovery takes $O(h)$ time; building $G_P$ takes $O(n+m)$ time; and each completion is one run of Dijkstra's algorithm, taking $O(m+n\log n)$ time with Fibonacci heaps~\cite{ft87}. The total time is therefore
\[
O\bigl(h^3(m+n\log n)\bigr)
\subseteq O\bigl(n^3(m+n\log n)\bigr).
\]
Processing the pairs $(A,x)$ one at a time, and keeping only the current best path, keeps the working space at $O(n+m+h^2)$. On unweighted graphs, Dijkstra's algorithm is replaced by breadth-first search; after preprocessing $m=\Omega(n)$, so the time is $O(n^3m)$. As usual for real weights, the bound counts arithmetic operations and comparisons on weights.
\end{proof}

\section{Concluding remarks}\label{sec:conclusion}

We have shown that the next-to-shortest path problem on positively weighted digraphs can be solved in $O(n^3(m+n\log n))$ time, improving the $O(n^4m^3\log n)$ bound of Chen, Wein and Zhang~\cite{cwz}. The algorithm consists of $O(n^2)$ runs of a two-token DAG dynamic program followed by $O(n^3)$ ordinary shortest-path computations, and its correctness rests on a single structural statement about a fixed optimal solution, the Local Replacement Theorem.

Several questions remain. Two stages contribute to the running time: computing $O(h^2)$ tables, each in $O(hq)$ time, and performing $O(h^3)$ completions, each in $O(m+n\log n)$ time. Improving only one stage need not improve the overall exponent; reducing the number of representatives, sharing tables across different $x$, or sharing work between completions with the same prefix, are natural directions. It would also be interesting to know whether the fixed-deviation-point hardness of Proposition~\ref{prop:fixedA-hard} can be circumvented in special graph classes, giving near-linear algorithms in the spirit of the undirected results~\cite{wu13,wgw12}. Finally, CWZ raise the $k$-longest detour problem on general unweighted digraphs; whether the uncrossing technique of Section~\ref{sec:structure} extends to detours with at least $d_G(s,t)+k$ edges for fixed $k>1$ is open.

\paragraph{Acknowledgments.}
AI tools were used to assist with the research, the program-based checking and the writing of this paper. The author is responsible for the verification of all mathematical claims.

\end{document}